\newtheorem*{rep@theorem}{\rep@title}
\newcommand{\newreptheorem}[2]{%
	\newenvironment{rep#1}[1]{%
		\def\rep@title{#2 \ref{##1}}%
		\begin{rep@theorem}}%
		{\end{rep@theorem}}}
\newtheorem{theorem}{Theorem}
\newtheorem{definition}{Definition}
\newtheorem{lemma}{Lemma}
\newtheorem{proposition}{Proposition}
\newtheorem{corollary}{Corollary}
\newcommand{\E}{\mathbb{E}}
\newcommand{\etal}{\emph{et al.~}}
\newcommand{\ie}{\textit{i.e.}}
\newcommand{\oo}{\mathscr{O}}
\renewcommand{\aa}{\mathscr{A}}
\newcommand{\bb}{\mathscr{B}}
\newcommand{\COMMENT}[1]{}
\newcommand{\ket}[1]{|#1\rangle}
\def\01{\{0,1\}}
\def\01{\{0,1\}}
\newcommand{\zo}{\{0,1\}}
\title{A note on the quantum query complexity of permutation symmetric functions}
\author{Andr{\'e} Chailloux\footnote{Inria de Paris, EPI SECRET, \texttt{andre.chailloux@inria.fr}}}
\date{}
\begin{document}
\maketitle

\begin{abstract}
It is known since the work of \cite{AA14} that for any permutation symmetric function $f$, the quantum query complexity is at most polynomially smaller than the classical randomized query complexity, more precisely that $R(f) = \widetilde{O}\left(Q^7(f)\right)$. In this paper, we improve this result and show that $R(f) = {O}\left(Q^3(f)\right)$ for a more general class of symmetric functions. Our proof is constructive and relies largely on the quantum hardness of distinguishing a random permutation from a random function with small range from Zhandry \cite{Zha15}.
\end{abstract}
	

\section{Introduction}

$\indent$The black box model has been a very fruitful model for understanding the possibilities and limitations of quantum algorithms. In this model, we can prove some exponential speedups for quantum algorithms, which is notoriously hard to do in standard complexity theory. Famous examples are the Deutsch-Josza problem~\cite{DJ92} and Simon's problem~\cite{Sim94}. There has been a great line of work to understand quantum query complexity, which developed some of the most advanced algorithms techniques. Even Shor's algorithm~\cite{Sho94} for factoring fundamentally relies on a black box algorithm for period finding.

We describe here the query complexity model in a nutshell. The idea is that we have to compute $f(x_1,\dots,x_n)$ where each $x_i \in [M]$ can be accessed via a query. We consider decision problems meaning that $f : S \rightarrow \zo$ with $S \subseteq [M]^n$. In this paper, we will consider inputs $x \in [M]^n$ equivalently as functions from $[n] \rightarrow [M]$. We are not interested in the running time of our algorithm but only  want to minimize the number of queries to $x$, which in the quantum setting consists of applying the unitary $\oo_x : \ket{i}\ket{j} \rightarrow \ket{i}\ket{j + x_i}$. $D(f),R(f)$ and $Q(f)$ represent the minimal amount of queries to compute $f$ with probability greater than $2/3$ (or = 1 for the case of $D(f)$) using respectively a deterministic algorithm with classical queries, a randomized algorithm with classical queries and a quantum algorithm with quantum queries. 

As we said before, the query complexity is great for designing new quantum algortihms. It is also very useful for providing black box limitations for quantum algorithms. There are some cases in particular where we can prove that the quantum query complexity of $f$ is at most polynomially smaller than classical (deterministic or randomized) query complexity. For example:
\begin{itemize}
	\item for specific functions such as search~\cite{BBBV97} or element distinctness~\cite{AS04,Kut05,Amb05}, we have respectively $Q(Search) =  \Theta(n^{1/2}), D({Search}) = \Theta(n)$ and $Q(ED) = \Theta(n^{2/3}), D(\textrm{{ED}}) =  \Theta(n)$.
	\item For any total function $f$ \ie \ when its domain $S = [M]^n$, Beals \etal~\cite{BBC+01} proved using the polynomial method that $D(f) \le O(Q^6(f))$.
\end{itemize}

Another case of interest where we can lower bound the quantum query complexity is the case of permutation symmetric functions.
There are several ways of defining such functions and we will be interested in the following definitions for a function $f : S \rightarrow \zo$ with $S \subseteq [M]^n$.
\begin{definition} $ \ $ 
\begin{itemize}
	\item $f$  permutation symmetric of the first type iff. $\forall \pi \in S_n, \ f(x) = f(x \circ \pi)$.
	\item $f$ is permutation symmetric of the second type iff. $\forall \pi \in S_n, \ \forall \sigma \in S_M,  \\ f(x) = f(\sigma \circ x \circ \pi)$.
\end{itemize}
where $S_n$ (resp. $S_M$) represents the set of permutations on $[n]$ (resp. $[M]$).

\end{definition}

\noindent Here, recall that we consider strings $x \in [M]^n$ as functions from $[n] \rightarrow [M]$. Notice also that this definition implies that $S$ is stable by permutation, meaning that $x \in S \Leftrightarrow \forall \pi \in S_n, \ x \circ \pi \in S$.  We already know from the work of Aaronson and Ambainis the following result:
\begin{theorem}[\cite{AA14}] \label{Theorem:Old}
	For any permutation invariant function $f$ of the second type, $R(f) \le \widetilde{O}(Q^7(f))$.
\end{theorem}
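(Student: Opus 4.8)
The plan is to exploit the fact that second-type symmetry collapses each input down to a single combinatorial object. Since $f(x) = f(\sigma \circ x \circ \pi)$ for all $\pi \in S_n$ and $\sigma \in S_M$, the value $f(x)$ depends only on the multiset of preimage sizes $\{|x^{-1}(j)| : j \in [M]\}$, equivalently on the partition of $n$ that $x$ induces. Writing $h_k(x)$ for the number of symbols appearing exactly $k$ times, the vector $(h_1(x), h_2(x), \dots)$ (the \emph{type} of $x$) completely determines $f(x)$. The randomized algorithm I would build is the obvious one: draw $s = \mathrm{poly}(Q(f))$ coordinates $i$ uniformly at random, query them, and use the empirical collision pattern of the sample to decide $f$. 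Everything then reduces to showing that $s = \mathrm{poly}(Q(f))$ samples suffice to pin down the type accurately enough to fix $f(x)$ with probability $\ge 2/3$.

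First I would turn this correctness requirement into a statement about pairs of types. The sampling algorithm can only fail if there exist inputs $x,y$ with $f(x)\neq f(y)$ whose types are so close that a uniform sample of size $s$ cannot tell them apart; call such a pair $s$-sample-indistinguishable. So it suffices to prove the contrapositive quantitative bound: whenever $f(x)\neq f(y)$, the pair is not $s$-sample-indistinguishable for $s=\mathrm{poly}(Q(f))$. Because any quantum algorithm for $f$ must in particular separate a random input of type $\lambda(x)$ from a random input of type $\lambda(y)$, we have $Q(f) \ge Q_{\mathrm{dist}}(x,y)$, the quantum query complexity of distinguishing the two types. The theorem therefore follows from one structural inequality: if $x,y$ are $s$-sample-indistinguishable then $Q_{\mathrm{dist}}(x,y) = \Omega(s^{1/7})$ up to logarithmic factors, i.e. the quantum speedup for distinguishing symmetric input types is at most a seventh power.

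To prove that structural inequality I would invoke the polynomial method of Beals \etal~\cite{BBC+01}: the acceptance probability of a $T$-query quantum algorithm is a real polynomial of degree at most $2T$ in the $0/1$ indicator variables of the input. Averaging this polynomial over the action of $S_n \times S_M$ reduces it to a polynomial of the same degree in the collision statistics $h_1, h_2, \dots$, which must take values near $0$ on type $\lambda(x)$ and near $1$ on type $\lambda(y)$. If these two types agree on every collision statistic that an $s$-sample experiment can detect, then forcing a low-degree polynomial to jump between them requires its degree—and hence $T$—to be large; quantitatively this is calibrated against the tight quantum lower bounds for the basic symmetric primitives, the collision problem and approximate counting, where the quantum-versus-classical gap is already a fixed polynomial. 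In the other direction, $s$ samples suffice classically to estimate each relevant collision statistic to exactly the precision that the degree bound shows $f$ can resolve, which closes the loop and bounds $R(f)$.

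The main obstacle is entirely quantitative: driving the exponent down to $7$. The difficulty is that the type is a high-dimensional object—$f$ may depend on many frequencies $h_k$ simultaneously—so one cannot simply cite a single collision bound but must control the worst-case speedup uniformly over all pairs of types arising in $S$, track how the estimation error in each collision moment propagates, and take a union bound over the potentially many relevant frequency levels. Each of these steps contributes a polynomial factor, and it is the careful bookkeeping of polynomial degree versus sampling precision versus number of levels that accumulates to the seventh power; a cruder analysis gives a worse exponent, and this is precisely the slack that the improved $O(Q^3)$ bound of the present paper later removes by replacing the polynomial-method accounting with the sharp permutation-versus-small-range lower bound of Zhandry~\cite{Zha15}.
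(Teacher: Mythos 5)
First, a point of comparison: the paper does not prove Theorem \ref{Theorem:Old} at all --- it is imported verbatim from \cite{AA14} as known background --- so there is no in-paper argument to measure you against. What can be said is that your sketch follows the general strategy of the original Aaronson--Ambainis proof (reduce $f$ to a function of the type, i.e.\ the multiset of fiber sizes; compute $f$ classically by sampling; push all the hardness into a quantum lower bound for distinguishing two close types), and that this is quite different from the technique the present paper uses for its own Theorem \ref{Theorem:Main}, namely Zhandry's permutation-versus-small-range lower bound, which sidesteps the polynomial-method accounting entirely and is what buys both the more general symmetry class and the improvement of the exponent from $7$ to $3$.

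As a proof, however, your proposal has a genuine gap exactly where the content of the theorem lies. The entire difficulty is the ``structural inequality'' you state in your second paragraph: if two types cannot be told apart by $s$ uniform classical samples, then distinguishing them quantumly costs $\widetilde{\Omega}(s^{1/7})$ queries. You do not prove this; you name the tools one would reach for (symmetrize the acceptance polynomial of \cite{BBC+01} down to a polynomial in the collision statistics, then ``calibrate'' against the collision and approximate-counting lower bounds), but the calibration \emph{is} the theorem. In particular: (i) a degree lower bound for the symmetrized polynomial does not follow from the collision or counting bounds by citation --- those are lower bounds for specific promise problems, and transferring them to an arbitrary pair of nearby types requires an explicit reduction, which is precisely where the exponent comes from; (ii) your reduction from ``compute $f$ on every input'' to ``distinguish each pair of types'' needs a concrete estimation procedure and a metric on types (with $s$ samples of a string all of whose fibers are small you may observe no collisions at all, so ``the empirical collision pattern'' must be replaced by carefully chosen statistics), together with the union bound over frequency levels that you acknowledge in your final paragraph but do not carry out. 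As written, the proposal is a plausible roadmap to the argument of \cite{AA14} rather than a proof of the statement.
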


In a recent survey on quantum query complexity and quantum algorithms \cite{Amb17}, Ambainis writes:

\begin{quote}
``It has been conjectured since about $2000$ that a similar result also holds for $f$ with a symmetry of the first type.''
\end{quote}

\paragraph{Contribution.}
The contribution of this paper is to prove the above conjecture. We show the following:
\begin{theorem}\label{Theorem:Main}
	For any permutation invariant function $f$ of the first type, $R(f) \le O(Q^3(f))$.
\end{theorem}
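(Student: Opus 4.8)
The plan is to exploit the defining feature of first-type symmetry: since $f(x) = f(x \circ \pi)$ for all $\pi \in S_n$, the value $f(x)$ depends only on the labeled frequency vector $m(x) = (m_1, \dots, m_M)$ with $m_j = |\{i : x_i = j\}|$ and $\sum_j m_j = n$. I therefore want to build a randomized algorithm that, using $O(q^3)$ classical queries (writing $q = Q(f)$), recovers enough of $m(x)$ to pin down $f(x)$. The shape of the algorithm is: sample $t = \Theta(q^3)$ positions uniformly at random and query them; from the observed values reconstruct an explicit input $\hat x \in [M]^n$ that is statistically consistent with these observations; and output $f(\hat x)$, computed offline at no further query cost. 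Correctness reduces to a single assertion: with high probability over the sampling, $\hat x$ and the true $x$ are indistinguishable by any $q$-query quantum algorithm. Granting this, since some $q$-query algorithm computes $f$ with bias bounded away from $1/2$, it must return essentially the same answer on $x$ and on $\hat x$, forcing $f(\hat x) = f(x)$.

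The quantum-indistinguishability assertion is exactly where Zhandry's bound enters. I would split the value-multiset of $x$ at a frequency threshold $\theta \approx n/q^3$: the \emph{heavy} values $H = \{j : m_j \ge \theta\}$, of which there are at most $q^3$ since $\sum_j m_j = n$, and the \emph{light} values carrying the remaining mass, which --- when there are many of them --- form a large-range tail. The reconstruction $\hat x$ is required to reproduce the heavy frequencies essentially exactly while replacing the light tail by a canonical spreading of the same total mass over a large range. The content of Zhandry's theorem is that a function of range $r$ is indistinguishable from a random permutation by a $q$-query quantum algorithm up to advantage $O(q^3/r)$; applying this to the light tail with $r \gg q^3$ shows that the exact arrangement of the light mass is invisible to the algorithm, so swapping the true tail for the reconstructed one cannot change any $q$-query algorithm's behaviour. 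The heavy part, matched frequency-by-frequency, contributes negligibly to the distinguishing advantage.

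On the classical side I must show $t = \Theta(q^3)$ samples suffice for the reconstruction. Each heavy value is hit $\Theta(t\, m_j/n) = \Omega(1)$ times, so by a Chernoff bound its frequency is estimated well enough --- the threshold $\theta \approx n/q^3$ is chosen precisely so that $t = \Theta(q^3)$ samples resolve frequencies down to this scale --- and the total light mass is estimated from the fraction of samples that miss $H$. This is the same scale at which, by Zhandry's bound, the quantum algorithm stops being sensitive to the input: the exponent $3$ in $O(q^3)$ is inherited directly from the $q^3$ in the small-range bound, so the information the classical algorithm fails to collect is exactly the information $f$ cannot depend on.

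The main obstacle I anticipate is making the heavy/light decomposition and the resampling robust at the boundary rather than only at these two clean extremes. Values with frequency near $\theta$, and inputs whose light part has only moderately many distinct values (range comparable to, but not far above, $q^3$), fall outside both the ``matched-heavy'' and the ``large-range-tail'' regimes, and there the crude threshold argument yields neither a good frequency estimate nor a usable Zhandry bound. Handling this middle regime --- presumably via a hybrid argument interpolating between the true and reconstructed tails, or a more careful scale-by-scale accounting of the frequency spectrum, while simultaneously controlling the error accumulated over the up-to-$q^3$ heavy values --- is where I expect the real work of the proof to lie; the reduction to Zhandry and the sample-complexity estimate are comparatively routine once the decomposition is set up correctly.
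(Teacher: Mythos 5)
Your overall strategy --- reconstruct an explicit $\hat x$ from $O(q^3)$ samples and output $f(\hat x)$ --- breaks down at a point the paper explicitly warns about: $f$ is a partial function on $S \subseteq [M]^n$, and a resampled input whose heavy frequencies are only approximately matched and whose light tail has been ``canonically spread'' will in general not lie in $S$ at all, so $f(\hat x)$ is simply undefined (think of collision-type promises such as ``$x$ is $1$-to-$1$ or $2$-to-$1$''). The fallback argument you give --- run the $q$-query algorithm, note it cannot distinguish $x$ from $\hat x$, conclude $f(\hat x) = f(x)$ --- also needs $\hat x \in S$, since the algorithm's output is only guaranteed to mean anything on the promise. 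Moreover, the indistinguishability claim itself is not what Zhandry's theorem provides: that theorem compares the \emph{distribution} of a uniformly random permutation against the \emph{distribution} $D_r$ of a structured random small-range function; it says nothing about two \emph{fixed} inputs with similar frequency spectra. Turning ``similar spectrum implies similar acceptance probability'' into a theorem is essentially the hard combinatorial core of the Aaronson--Ambainis proof (and is part of what costs them the exponent $7$); your heavy/light threshold with its acknowledged boundary regime is exactly where that difficulty resurfaces, and I do not see how to close it along the lines you sketch.

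The paper's route avoids every one of these obstacles. It never tries to land back inside $S$: it draws $C \leftarrow D_r$ with $r=\Theta(q^3)$, queries $x$ on the at most $r$ points of $Im(C)$, which \emph{fully determines} the string $x\circ C$, and then classically simulates the (amplified) quantum algorithm with oracle $\oo_{x\circ C}$ --- deliberately using the algorithm's behaviour on inputs outside the promise. Correctness comes from viewing $\aa_3^{\oo_{x\circ g}}$ as a query algorithm $\bb_x^{\oo_g}$ in $g$: by first-type symmetry it outputs $f(x)$ with probability at least $20/27$ when $g$ is a random permutation, and Zhandry's bound (applied to the genuine distributions $D_{\textrm{perm}}$ versus $D_r$, with $\bb_x$ as the would-be distinguisher) forces the same conclusion up to an additive $2/27$ when $g \leftarrow D_r$. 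No frequency estimation, no reconstruction, no hybrid over the spectrum. You would need to replace your ``reconstruct and evaluate'' step with this ``subsample and simulate'' step for the argument to go through.
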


This result not only generalizes the result for a more general class of permutation symmetric function, but also improves the exponent from $7$ to $3$. In the case where $M = 2$, this result was already known \cite{AA14} with an exponent of $2$, which is tight from Grover's algorithm.

 The proof technique is arguably simple, constructive and relies primarily on the quantum hardness of distinguishing a random permutation from a random function with small range from Zhandry \cite{Zha15}. We start from a permutation symmetric function $f$. At high level, the proof goes as follows:
\begin{itemize}
	\item We start from an algorithm $\aa$ that outputs $f(x)$ for all $x$ with high (constant) probability. Let $q$ the number of quantum queries to $\oo_x$ performed by $\aa$.
	\item Instead of running $\aa$ on input $x$, we choose a random function $C : [n] \rightarrow [n]$  with small range $r$ (from a distribution specified later in the paper) and apply the algorithm $\aa$ where we replace calls to $\oo_x$ with calls to $\oo_{x \circ C}$. We note that there is a simple procedure to compute $\oo_{x \circ C}$ from $\oo_x$ and $\oo_C$.
	\item If we take $r = \Theta(q^3)$, we can use Zhandry's lower bound, we show that for each $x$, the output will be close to the output of the algorithm $\aa$ where we replace calls to $\oo_{x \circ C}$ with calls to $\oo_{x \circ \pi}$ for a random permutation $\pi$. Using the fact that $f$ is permutation symmetric, the latter algorithm will output with high probability $f(x \circ \pi) = f(x)$. In other words, if the algorithm $\aa$ that calls $\oo_{x \circ C}$ wouldn't output $f(x)$ for a random $C$ and a fixed $x$ then we would find a distinguisher between a random $C$ and a random permutation $\pi$, which is hard from Zhandry's lower bound.
	\item The above tells us that applying $\aa$ where we replace calls to $\oo_x$ with calls to $\oo_{x \circ C}$ gives us output $f(x)$ with high probability. Knowing $C$, we can construct the whole string $x \circ C$ by querying $x$ on inputs $i \in Im(C)$ which can be done with $Im(C) \le r$ classical queries which allows us to construct the unitary $\oo_{x \circ C}$. This means we can emulate $\aa$ on input $x \circ C$ with $r$ classical queries to $x$ and this gives us $f(x)$ with high probability.
\end{itemize}

\noindent After presenting a few notations, we  dive directly into the proof of our theorem.

\section{Preliminaries}
\subsection{Notations}
For any function $f$ we denote by $Dom(f)$ its domain and by $Im(f)$ its range (or image).
\paragraph{Query algorithms.}
A query algorithm $\aa^{\oo}$ is described by an algorithm that calls another function $\oo$ in a black box fashion. We will never be interested in the running time or the size of $\aa$ but only in the number of calls, or queries, to $\oo$. We will consider both the cases where the algorithm $\aa^{\oo}$ is classical and quantum. In the latter $\oo$ will be a quantum unitary. In both cases, we only consider algorithms that output a single bit.
\paragraph{Oracles.}
We use oracles to perform black box queries to a function. For any function $g$, $\oo_g^{\textrm{Classical}}$ is a black box that on input $i$ outputs $g(i)$ while $\oo_g$ (without any superscript) is the quantum unitary satisfying
$$ \oo_g : \ket{i}\ket{j} \rightarrow \ket{i}\ket{j + g(i)}.$$

\paragraph{Query complexity.} Fix a function $f : S \rightarrow \zo$ where $S \subseteq [M]^n$. 

\begin{definition}
	The randomized query complexity $R(f)$ of $f$ is the smallest integer $q$ such that there exists a  classical randomized algorithm $\aa^{\oo}$ performing $q$ queries to $\oo$ satisfying: 
	$$ \forall x \in S, \ \Pr[\aa^{\oo_x^{\textrm{Classical}}} \textrm{ outputs } f(x)] \ge 2/3.$$
\end{definition}

\begin{definition}
	The quantum query complexity $Q(f)$ of $f$ is the smallest integer $q$ such that there exists a  quantum algorithm $\aa^{\oo}$ performing $q$ queries to $\oo$ satisfying: 
	$$ \forall x \in S, \ \Pr[\aa^{\oo_x} \textrm{ outputs } f(x)] \ge 2/3.$$
\end{definition}

\subsection{Hardness of distinguishing a random permutation from a random function with small range}
Our proof will use a quantum lower bound on distinguishing a random permutation from a random function with small range proven in \cite{Zha15}. Following this paper, we define, for any $r \in [n]$, the following distribution $D_r$ on functions from $[n]$ to $[n]$ from which can be sampled as follows.
\begin{itemize}
	\item Draw a random function $g$ from $[n] \rightarrow [r]$.
	\item Draw a random injective function $h$ from $[r] \rightarrow [n]$.
	\item Output the composition  $h \circ g$.
\end{itemize}

Notice that any function $f$ drawn from $D_r$ is of small range and satisfies $|Im(f)| \le r$. Let also $D_{\textrm{perm}}$ be the uniform distribution on permutations on $[n]$. Zhandry's lower bound can be stated as follows:

\begin{proposition}[\cite{Zha15}] \label{Proposition:LBCollision}
	There exists an absolute constant $\Lambda$ such that for any $r \in [n]$ and any quantum query algorithm $\mathscr{B}^\oo$ performing at most $\lceil \Lambda {r}^{1/3}\rceil$ queries to $\oo$:
	$$\forall b \in \zo, \  \left|\E_{\pi \leftarrow D_{\textrm{perm}}}\Pr[\mathscr{B^{\oo_\pi}} \textrm{ outputs } b] - \E_{C \leftarrow D_r}\Pr[\mathscr{B}^{\oo_C} \textrm{ outputs } b]\right| \le \frac{2}{27}.$$
\end{proposition}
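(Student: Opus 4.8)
The plan is to prove the bound by the polynomial method for quantum oracles: I would reduce the distinguishing advantage to a comparison of the low-order joint statistics of the two function distributions, symmetrize, and then extract the cube-root threshold from an approximation-theoretic estimate.

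First I would fix a $q$-query algorithm $\bb$ with $q=\lceil\Lambda r^{1/3}\rceil$ and write its pre-measurement state on oracle $\oo_f$ as $\ket{\psi^f}=U_q\,\oo_f\,U_{q-1}\cdots\oo_f\,U_0\ket{0}$, where the $U_t$ are fixed unitaries independent of $f$. For a function $f:[n]\to[n]$ introduce the $\zo$-valued indicators $P^f_{i,j}=[f(i)=j]$. By induction on the query count — each application of $\oo_f$ sends the amplitude on $\ket{i}\ket{j'}$ to $\sum_c P^f_{i,c}\,(\textrm{old amplitude on }\ket{i}\ket{j'-c})$, raising the degree by exactly one, while the $U_t$ leave it unchanged — every amplitude $\braket{z}{\psi^f}$ is a polynomial of degree at most $q$ in the $P^f_{i,j}$. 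Hence, for each $b$, the acceptance probability $p_b(f):=\Pr[\bb^{\oo_f}\textrm{ outputs }b]$ is a polynomial of degree at most $2q$ in the $P^f_{i,j}$. Using $(P^f_{i,j})^2=P^f_{i,j}$ and $P^f_{i,j}P^f_{i,j'}=0$ for $j\neq j'$, I would reduce $p_b$ to multilinear form, so that each surviving monomial is a product $\prod_{\ell=1}^{k}P^f_{i_\ell,j_\ell}$ over \emph{distinct} inputs $i_1,\dots,i_k$ with $k\le 2q$.

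Taking the expectation over the oracle, $\E_{f\leftarrow D}\,p_b(f)$ becomes a fixed linear combination — with coefficients depending only on $\bb$ and not on $D$ — of the marginals $\Pr_{f\leftarrow D}[\,f(i_1)=j_1,\dots,f(i_k)=j_k\,]$ over distinct inputs and $k\le 2q$. Consequently the quantity to bound is the same linear combination applied to the \emph{differences} of these marginals between $D_{\textrm{perm}}$ and $D_r$. These marginals differ only through collisions: for distinct inputs, $D_{\textrm{perm}}$ forces the outputs to be injective, whereas under $D_r$ they are injective except on a collision event whose probability is governed by $1/r$. The decisive structural fact is that both $D_{\textrm{perm}}$ and $D_r$ are invariant under permuting the inputs and under permuting the codomain $[n]$; I would exploit this symmetry — symmetrizing the degree-$2q$ polynomial over these actions, following Aaronson and Shi — to collapse the estimate to essentially a univariate polynomial of degree $O(q)$, bounded in $[0,1]$, evaluated at the two parameter values (collision/range parameters) corresponding to $D_{\textrm{perm}}$ and $D_r$.

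The main obstacle is the final quantitative step: showing that this bounded low-degree polynomial cannot separate the two nearby parameter values by more than $2/27$ unless $q=\Omega(r^{1/3})$. This is genuinely the quantum heart of the argument and cannot be done by naive moment-counting: a classical birthday-style pairwise bound would only control the variation at order $q^2/r$, suggesting the threshold $\sqrt{r}$ — but this can be no valid upper bound for quantum algorithms, since the BHT collision finder already distinguishes the two distributions with $O(r^{1/3})\ll\sqrt{r}$ queries. The correct estimate instead comes from a Markov--Bernstein-type inequality controlling how fast a degree-$O(q)$ polynomial bounded in $[0,1]$ can vary over the relevant parameter interval; carrying the constants through yields a distinguishing advantage of order $q^3/r$. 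I would then choose the absolute constant $\Lambda$ small enough that $q\le\lceil\Lambda r^{1/3}\rceil$ forces this $O(q^3/r)$ bound below $2/27$, simultaneously for both $b\in\zo$, which completes the proof.
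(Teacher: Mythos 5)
First, a point of comparison: the paper does not prove this proposition at all --- it imports it wholesale from \cite{Zha15} (``obtained immediately by combining Theorem 8 and Lemma 1''). So you are attempting a from-scratch reproof of Zhandry's bound. The front end of your plan is sound and standard: amplitudes are degree-$\le q$ polynomials in the indicators $P^f_{i,j}$, acceptance probabilities have degree $\le 2q$, and after multilinearization the expectation under either distribution is a fixed linear combination of pattern marginals on distinct inputs. Your symmetrization step can also be made completely concrete, and more cleanly than you state it: for $k$ distinct inputs whose prescribed outputs form $b$ distinct values, the $D_r$-marginal equals $\frac{r(r-1)\cdots(r-b+1)}{r^k}\cdot\frac{(n-b)!}{n!}$, which is a polynomial of degree $\le k-1$ in the single variable $1/r$, and its value at $1/r=0$ is exactly the corresponding $D_{\textrm{perm}}$-marginal (nonzero only when $b=k$). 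Hence the quantity to bound is $|p_b(1/r)-p_b(0)|$ for one univariate polynomial $p_b$ of degree $\le 2q$; the ``two parameter values'' you allude to are $1/r$ and $0$.

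The genuine gap is your final step, and your own sanity check exposes it. The polynomial $p_b$ is guaranteed to lie in $[0,1]$ only at the discrete points $1/m$ that correspond to actual distributions (and, because $h$ is injective, only for integers $m\le n$); nothing bounds it between those points. A Markov--Bernstein inequality needs boundedness on an interval, and if $p_b$ really were bounded in $[0,1]$ on the whole parameter interval, Markov's inequality ($\max|p'|\le 2d^2\max|p|$ on a unit interval) would give $|p_b(1/r)-p_b(0)|\le O(q^2/r)$ --- precisely the $\sqrt{r}$-type threshold that you correctly rule out by invoking BHT. In other words, the hypothesis your closing argument relies on is not just unjustified; it is necessarily false, since interval boundedness plus Markov contradicts the existence of the BHT distinguisher. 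No derivative bound of Markov--Bernstein type (these always scale as degree squared) can produce $q^3/r$. The cube is exactly the content of Zhandry's polynomial lemma: a degree-$d$ polynomial satisfying $p(1/m)\in[0,1]$ for \emph{all} positive integers $m$ and $p(0)\in[0,1]$ obeys $|p(0)-p(1/r)|=O(d^3/r)$, where the extra factor of $d$ is the price of having only discrete constraint points accumulating at $0$; its proof is genuinely different from an interval Markov argument. (A secondary issue: since your family $D_m$ exists only for $m\le n$, even that lemma does not apply verbatim to the injective small-range distribution; \cite{Zha15} bridges the injective and non-injective cases by a separate reduction, which is exactly why the paper must combine two of Zhandry's results rather than one.) Without supplying this discrete-points lemma, your argument stops at precisely the step where the $r^{1/3}$ exponent has to be produced.
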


\noindent This is obtained immediately by combining Theorem $8$ and Lemma $1$ of $\cite{Zha15}$\footnote{Equivalently, this is obtained immediately by combining Lemma 3.2 and Lemma 3.4 from the arXiv version \texttt{quant-ph:1312.1027.}}.

\section{Proving our main theorem}

The goal of this section is to prove Theorem \ref{Theorem:Main}.  Fix a function $f : S \rightarrow \zo$ where $S \subseteq [M]^n$ with $Q(f) =q$. This means there exists a quantum query algorithm $\aa^\oo$ performing $q$ queries to $\oo$ such that 
 $$ \forall x \in S, \ \Pr[\aa^{\oo_x} \textrm{ outputs } f(x)] \ge 2/3.$$
 We first amplify the success probability to $20/27$.
 \begin{lemma} \label{Lemma}
 	There exists a quantum query algorithm $\aa^\oo_3$ that performs $3q$ queries to $\oo$ such that 
 	 $$ \forall x \in S, \ \Pr[\aa^{\oo_x}_3 \textrm{ outputs } f(x)] \ge \frac{20}{27}.$$
 \end{lemma}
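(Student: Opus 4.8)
The plan is to use the standard majority-vote amplification. First I would define $\aa_3^\oo$ to be the algorithm that runs the given algorithm $\aa^\oo$ three times — each time on fresh ancilla registers and with its own independent internal randomness — measures the single output bit of each run, and outputs the majority of the three resulting bits. Since each invocation of $\aa^\oo$ makes exactly $q$ queries to $\oo$ and the three invocations are performed sequentially, $\aa_3^\oo$ makes $3q$ queries in total, as required. The only point that needs a moment's care is that $\aa^\oo$ outputs a single bit: because we measure after each run, the three outputs are genuine independent classical bits, so no coherence between the runs needs to be tracked and the majority vote is a well-defined classical post-processing step.

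Next I would analyze the success probability. Fix $x \in S$ and let $p = \Pr[\aa^{\oo_x} \textrm{ outputs } f(x)] \ge 2/3$. The majority of three independent runs equals $f(x)$ whenever at least two of the three runs are correct, so
$$\Pr[\aa_3^{\oo_x} \textrm{ outputs } f(x)] = p^3 + 3p^2(1-p) = 3p^2 - 2p^3.$$
Writing $g(p) = 3p^2 - 2p^3$, one computes $g'(p) = 6p(1-p) \ge 0$ for $p \in [0,1]$, so $g$ is non-decreasing on $[2/3,1]$ and the worst case occurs at $p = 2/3$.

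Finally I would evaluate $g$ at the worst case: $g(2/3) = 3 \cdot \tfrac{4}{9} - 2 \cdot \tfrac{8}{27} = \tfrac{36}{27} - \tfrac{16}{27} = \tfrac{20}{27}$. Combined with monotonicity this yields $\Pr[\aa_3^{\oo_x} \textrm{ outputs } f(x)] \ge 20/27$ for every $x \in S$, as claimed. There is no real obstacle here — the entire content is the elementary binomial computation, and the only thing worth noting is that the threshold $2/3$ maps exactly to $20/27$ under majority-of-three, which is precisely why this particular target probability appears in the statement (it is the value needed to later exploit the $2/27$ gap in Proposition \ref{Proposition:LBCollision}).
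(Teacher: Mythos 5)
Your proof is correct and is essentially identical to the paper's: both run $\aa^\oo$ three times independently, take the majority vote, and bound the success probability by $p^3 + 3p^2(1-p)$ evaluated at the worst case $p = 2/3$, giving $\frac{8}{27} + 3\cdot\frac{4}{27} = \frac{20}{27}$. The only difference is that you make the monotonicity of $3p^2 - 2p^3$ explicit, which the paper leaves implicit.
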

\begin{proof}
	$\aa^\oo_3$ will consist of the following: run $\aa^\oo$ independently $3$ times and take the output that occurs the most. For each $x$,  each run of $\aa^{\oo_x}$ outputs $f(x)$ with probability at least $2/3$. The probability that the correct $f(x)$ appears at least twice out of the $3$ results is therefore greater than $\frac{8}{27} + 3 \cdot \frac{4}{27} = \frac{20}{27}$.
\end{proof}

Using the fact that $f$ is permutation symmetric, we get the following corrolary:

\begin{corollary} \label{Corollary}
$$ \forall x \in S, \ \forall \pi \in S_n, \ 
\Pr[\aa^{\oo_{x \circ \pi}}_3 \textrm{ outputs } f(x)]
= \Pr[\aa^{\oo_{x \circ \pi}}_3 \textrm{ outputs } f(x \circ \pi)] \ge \frac{20}{27}.$$
\end{corollary}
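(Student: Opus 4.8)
The plan is to read off both the equality and the inequality directly from the two ingredients already established: the permutation symmetry of $f$ and Lemma \ref{Lemma}. No new probabilistic argument is needed; everything reduces to a substitution together with one check that we are allowed to apply the lemma.

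First I would use that $f$ is permutation symmetric of the first type, so $f(x) = f(x \circ \pi)$ for every $x \in S$ and every $\pi \in S_n$. Since $f(x)$ and $f(x \circ \pi)$ are then literally the \emph{same} bit in $\zo$, the two events ``$\aa^{\oo_{x \circ \pi}}_3$ outputs $f(x)$'' and ``$\aa^{\oo_{x \circ \pi}}_3$ outputs $f(x \circ \pi)$'' coincide, and hence so do their probabilities. This disposes of the stated equality with no computation.

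For the inequality I would invoke the stability of $S$ under permutation observed just after the definition: $x \in S$ implies $x \circ \pi \in S$. Thus $x \circ \pi$ is a legitimate input, and applying Lemma \ref{Lemma} to it gives $\Pr[\aa^{\oo_{x \circ \pi}}_3 \textrm{ outputs } f(x \circ \pi)] \ge \frac{20}{27}$. Chaining this with the equality from the previous step yields the full claim.

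I do not expect a genuine obstacle here; the only point requiring any attention is to justify applying Lemma \ref{Lemma} to $x \circ \pi$ rather than to $x$, which is precisely where the permutation-stability of the domain $S$ is used. Everything else is the elementary observation that feeding a permuted input $x \circ \pi$ to $\aa_3$ is covered by the lemma, and that symmetry identifies the target value $f(x \circ \pi)$ with $f(x)$.
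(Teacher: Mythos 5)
Your proposal is correct and matches the paper's intended argument exactly: the paper leaves the corollary as an immediate consequence of permutation symmetry ($f(x)=f(x\circ\pi)$ gives the equality) combined with Lemma \ref{Lemma} applied to $x\circ\pi$, which is valid because $S$ is stable under permutation. Your explicit attention to the domain-stability point is the only nontrivial step, and you handled it correctly.
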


\subsection{Looking at a small number of indices of x}

The main idea of the proof is to show that $\aa_3$ will output $f(x)$ with high probability when replacing queries to $\oo_x$ with queries to $\oo_{x \circ C}$ for $C$ chosen uniformly from $D_r$ for some $r = \Theta(Q^3(f))$.  First notice that for any $x : [n] \rightarrow [M]$ and any $g : [n] \rightarrow [n]$, it is possible to apply $\mathscr{O}_{x \circ g}$ with $2$ calls to $\oo_g$ and $1$ call to $\oo_x$ with the following procedure:
 \begin{align*}
 \ket{i}\ket{j}\ket{0} 
 \rightarrow \ket{i}\ket{j}\ket{g(i)}
 \rightarrow \ket{i}\ket{j + (x \circ g) (i)}\ket{g(i)}
 \rightarrow \ket{i}\ket{j + (x \circ g)(i)}\ket{0}
 \end{align*}
 where we respectively apply $\oo_g$ on registers $(1,3)$ ; $\oo_x$ on registers $(3,2)$ and $\oo_g^\dagger$ on registers $(1,3)$.
 
Therefore, for any fixed (and known) $x$, for any function $g : [n] \rightarrow [n]$, we can look at $\aa_3^{\oo_{x \circ g}}$ as a quantum query algorithm that queries $\oo_g$. In other words, for each $x \in S$, there is a quantum query algorithm $\bb_x^\oo$ such that $\bb_x^{\oo_g} = \aa^{\oo_{x \circ g}}$ for any function $g : [n] \rightarrow [n]$. Notice also that since a query to $\oo_{x \circ g}$ is done by doing $2$ queries to $\oo_g$, we have that $\bb^\oo$ uses twice as many queries than $\aa_3^\oo$.
 
We can now prove our main proposition that shows that we can compute $f(x)$ by looking only at $x \circ C$ meaning that we only need to look at $Im(C) \le r$ random.

\begin{proposition} \label{Proposition:Main}
	Let $f : [M]^n \rightarrow \zo$ with $Q(f) = q$ and $r = \lceil216q^3\Lambda^{-3}\rceil$ where $\Lambda$ is the absolute constant from Proposition \ref{Proposition:LBCollision}.
	$$ \forall x \in S, \ \E_{C \leftarrow D_r}\Pr[\mathscr{A}^{\oo_{x \circ C}}_3 \textrm{ outputs } f(x)] \ge 2/3.$$ 
\end{proposition}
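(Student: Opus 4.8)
The plan is to recast $\aa_3^{\oo_{x\circ C}}$ as the query algorithm $\bb_x$ acting on the oracle for $C$, and then to use Zhandry's indistinguishability (Proposition \ref{Proposition:LBCollision}) to transport the success guarantee for a \emph{random permutation} (Corollary \ref{Corollary}) over to a \emph{random small-range function} $C \leftarrow D_r$. The whole argument is essentially a single application of the lower bound, so the only genuine work is bookkeeping the number of queries and the constants.

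First I would fix $x \in S$ and use the algorithm $\bb_x^\oo$ constructed just above, which satisfies $\bb_x^{\oo_g} = \aa_3^{\oo_{x\circ g}}$ for every function $g : [n] \to [n]$ and performs $6q$ queries to its oracle (each of the $3q$ queries of $\aa_3$ to $\oo_{x\circ g}$ is simulated by two queries to $\oo_g$, the call to $\oo_x$ being free since $x$ is fixed and known). Since both a uniform permutation $\pi \leftarrow D_{\textrm{perm}}$ and a sample $C \leftarrow D_r$ are functions $[n]\to[n]$, the algorithm $\bb_x$ can be run on either oracle, and Corollary \ref{Corollary} gives
$$ \E_{\pi \leftarrow D_{\textrm{perm}}} \Pr[\bb_x^{\oo_\pi} \textrm{ outputs } f(x)] = \E_{\pi \leftarrow D_{\textrm{perm}}} \Pr[\aa_3^{\oo_{x\circ\pi}} \textrm{ outputs } f(x)] \ge \frac{20}{27}. $$

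Before invoking the lower bound I would verify that the query budget is respected, i.e. that $6q \le \lceil \Lambda r^{1/3}\rceil$. This is precisely why $r$ was set to $\lceil 216 q^3 \Lambda^{-3}\rceil$: since $216 = 6^3$ we have $r \ge 216 q^3 \Lambda^{-3} = (6q/\Lambda)^3$, whence $\Lambda r^{1/3} \ge 6q$ and a fortiori $\lceil \Lambda r^{1/3}\rceil \ge 6q$. With this in hand I would apply Proposition \ref{Proposition:LBCollision} to $\bb_x$ with the fixed bit $b = f(x)$, obtaining $\left| \E_{\pi \leftarrow D_{\textrm{perm}}} \Pr[\bb_x^{\oo_\pi} \textrm{ outputs } f(x)] - \E_{C \leftarrow D_r}\Pr[\bb_x^{\oo_C} \textrm{ outputs } f(x)]\right| \le \frac{2}{27}$.

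Combining the two estimates and using $\bb_x^{\oo_C} = \aa_3^{\oo_{x\circ C}}$ then yields $\E_{C \leftarrow D_r}\Pr[\aa_3^{\oo_{x\circ C}} \textrm{ outputs } f(x)] \ge \frac{20}{27} - \frac{2}{27} = \frac{2}{3}$, which is the claim. I do not expect a serious obstacle: the argument is a one-shot application of Zhandry's bound. The two points that require care are that $f(x)$ is a \emph{fixed} bit for fixed $x$ (so that taking $b = f(x)$ is legitimate), and that the constant in the definition of $r$ has been tuned so that the $6q$-query algorithm $\bb_x$ lands exactly in the regime where the $\tfrac{2}{27}$ indistinguishability gap applies, leaving precisely the margin $\tfrac{20}{27} - \tfrac{2}{27} = \tfrac{2}{3}$.
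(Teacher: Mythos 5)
Your proposal is correct and follows essentially the same route as the paper: recast $\aa_3^{\oo_{x\circ C}}$ as the $6q$-query algorithm $\bb_x$, apply Corollary \ref{Corollary} for random permutations, check $6q \le \Lambda r^{1/3}$ via $r \ge (6q/\Lambda)^3$, and invoke Proposition \ref{Proposition:LBCollision} with $b = f(x)$ to lose only $\frac{2}{27}$. The only difference is that you spell out the constant-checking step slightly more explicitly than the paper does.
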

\begin{proof}
For each $x \in S$, we consider the algorithm  $\bb_x^\oo$ described above. Recall that for all $g : [n] \rightarrow [n]$, $\bb_x^{\oo_g}= \aa_3^{\oo_{x \circ g}}$. Since $\aa_3^\oo$ uses $3q$ queries, $\bb_x^\oo$ uses $6q$ queries. We first consider the case where $g$ is a random permutation. Using Corollary \ref{Corollary}:

\begin{align*} 
\forall x \in S, \ \E_{\pi \leftarrow D_{\textrm{perm}}}\Pr[\mathscr{B}^{\oo_\pi}_x \textrm{ outputs } 0]  & = 
\E_{\pi \leftarrow D_{\textrm{perm}}}\Pr[\mathscr{A}^{\oo_{x \circ \pi}}_3 \textrm{ outputs } f(x)] \ge \frac{20}{27}
\end{align*}

\noindent Using the lower bound of Proposition \ref{Proposition:LBCollision} noticing that $6q \le \Lambda r^{1/3}$, we have 

$$ \forall x \in S, \  \left|\E_{\pi \leftarrow D_{\textrm{perm}}}\Pr[\bb^{\oo_\pi}_x \textrm{ outputs } f(x)] - \E_{C \leftarrow D_r}\Pr[\mathscr{B}^{\oo_C}_x \textrm{ outputs } f(x)]\right| \le \frac{2}{27}.$$
which gives us
$$ \forall x \in S, \ \E_{C \leftarrow D_{r}}\Pr[\mathscr{B}^{\oo_C}_x \textrm{ outputs } f(x)] \ge \frac{20}{27} - \frac{2}{27} = 2/3.$$
Since for each $x \in S$, $\mathscr{B}^{\oo_C}_x = \aa_3^{x \circ C}$, we can therefore conclude
	$$ \forall x \in S, \ \E_{C \leftarrow D_r}\Pr[\mathscr{A}^{\oo_{x \circ C}}_3 \textrm{ outputs } f(x)] \ge 2/3.$$ 
	\end{proof}

\subsection{Constructing a classical query algorithm for f}
We can now use the above proposition to prove our main theorem.

\setcounter{theorem}{1}

\begin{theorem}[Restated]
	For any permutation invariant function $f$ of the first type, $R(f) \le O(Q^3(f))$.
\end{theorem}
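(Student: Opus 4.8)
The plan is to convert the quantum simulation guaranteed by Proposition \ref{Proposition:Main} into a genuine classical randomized query algorithm, exploiting the fact that the query complexity model charges nothing for internal computation. First I would have the classical algorithm sample $C \leftarrow D_r$ using only its private randomness; since $C$ is a function on $[n]$ chosen independently of the input $x$, this costs no queries to $x$. Because $|Im(C)| \le r$, I would then make one classical query $\oo_x^{\textrm{Classical}}(i)$ for each $i \in Im(C)$, thereby learning the value $x(i)$ at every point in the range of $C$. This uses at most $r$ classical queries in total.

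Once these values are known, the entire string $x \circ C : [n] \rightarrow [M]$ is fully determined, since $(x \circ C)(j) = x(C(j))$ and $C(j) \in Im(C)$ for every $j \in [n]$. Knowing $x \circ C$ completely, the classical algorithm can construct the unitary $\oo_{x \circ C}$ by itself and then simulate the quantum algorithm $\aa_3^{\oo_{x \circ C}}$ step by step, returning whatever $\aa_3$ outputs. Crucially, this simulation makes no further queries to $x$: it is pure (classical) computation, whose cost the model does not count.

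For correctness I would invoke Proposition \ref{Proposition:Main} directly: for every $x \in S$, the probability that $\aa_3^{\oo_{x \circ C}}$ outputs $f(x)$, averaged over $C \leftarrow D_r$ and over the internal measurement randomness of $\aa_3$, is at least $2/3$. The output distribution of the classical algorithm above coincides exactly with this averaged distribution, so it outputs $f(x)$ with probability at least $2/3$ for every $x \in S$, which is precisely what the definition of $R(f)$ demands. The number of classical queries is at most $r = \lceil 216 q^3 \Lambda^{-3}\rceil = O(q^3) = O(Q^3(f))$, yielding $R(f) \le O(Q^3(f))$.

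The step worth stating carefully — rather than a genuine technical obstacle — is the separation between queries and computation underlying the model: the classical randomized algorithm is permitted to run a (possibly exponential-time) classical simulation of a quantum circuit at zero query cost, so the only thing the complexity measure registers is the $O(q^3)$ classical queries used to reconstruct $x \circ C$. Everything else is a transparent translation of Proposition \ref{Proposition:Main}, and no further estimates are required.
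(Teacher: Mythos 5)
Your proposal is correct and follows essentially the same route as the paper: sample $C \leftarrow D_r$ for free, spend at most $r = \lceil 216 q^3 \Lambda^{-3}\rceil$ classical queries to learn $x$ on $Im(C)$, reconstruct $x \circ C$ and hence $\oo_{x \circ C}$, classically simulate $\aa_3^{\oo_{x \circ C}}$ at zero query cost, and conclude via Proposition \ref{Proposition:Main}. No gaps.
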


\begin{proof}
	Fix a function $f : S \rightarrow \zo$ where $S \subseteq [M]^n$ with $Q(f) =q$. This means there exists a quantum query algorithm $\aa^\oo$ performing $q$ queries to $\oo$ such that 
	$$ \forall x \in S, \ \Pr[\aa^{\oo_x} \textrm{ outputs } f(x)] \ge 2/3.$$
	We construct a randomized algorithm that performs $r = \lceil216q^3\Lambda^{-3}\rceil$ classical queries to $\oo_{x}^{\textrm{Classical}}$ as follows:
\begin{enumerate}
	\item Choose a random $C$ according to distribution $D_r$.
	\item Query $\oo_x^{\textrm{Classical}}$ to get all values $x_i$ for $i \in Im(C)$. This requires $|Im(C)| \le r$ queries to $\oo_x^{\textrm{Classical}}$. These queries fully characterize the function $x \circ C$, hence the quantum unitary $\oo_{x \circ C}$.
	\item From $\aa^\oo$, construct the quantum algorithm $\aa_3^\oo$ as in Lemma \ref{Lemma}. Recall that $\aa_3^\oo$ just consists of applying $\aa^\oo$ independently $3$ times and output the majority outcome.
	\item We consider $\aa_3^{\oo_{x \circ C}}$ as a quantum unitary circuit acting on $t$ qubits. At each step of the algorithm, we store the $2^{t}$ amplitudes. When $\oo_{x \circ C}$ is called, we use its representation from step $2$ to calculate its action on the $2^t$ amplitudes. Other parts of $\aa_3^{\oo_{x \circ C}}$ are treated the same way. While this uses a lot of computing power, it does not require any queries to $\oo_x^{\textrm{Classical}}$ or $\oo_x$ other than those used at step $2$.
\end{enumerate}
Step $4$ outputs the same output distribution than the quantum algorithm $\aa_3^{\oo_{x \circ C}}$. Using Proposition \ref{Proposition:Main}, for all $x \in S$, this algorithm outputs, $f(x)$ with probability greater than $2/3$, which implies 
$$R(f) \le r = \lceil216Q^3(f)\Lambda^{-3}\rceil.$$
\end{proof}

Notice that after step $2$, it is not possible to just compute $f(x \circ C)$, and try to show that it is equal to $f(x)$ since we don't even always have $x \circ C \in S$. This is yet another example in query complexity where we use the behavior of a query algorithm on inputs not necessarily in the domain of $f$.
\bibliography{paper}
\bibliographystyle{alpha}
\end{document}